\newtheorem{Th}{Theorem}[section]
\newtheorem{Cor}{Corollary}[section]
\newtheorem{Lemma}{Lemma}[section]
\newtheorem{Def}{Definition}[section]
\newtheorem{Rem}{Remark}[section]
\newcommand{\E}{\mathrm{e}}
\newcommand{\I}{\mathrm{i}}
\newcommand{\floor}[1]{\lfloor#1 \rfloor}
\newcommand*{\mailto}[1]{\href{mailto:#1}{\nolinkurl{#1}}}
\newcommand{\arxiv}[1]{\href{http://arxiv.org/abs/#1}{arXiv:#1}}
\newcommand{\msc}[1]{\href{http://www.ams.org/msc/msc2010.html?t=&s=#1}{#1}}
\numberwithin{equation}{section}
\begin{document}

\title{A Dynamic Uncertainty Principle for Jacobi Operators}

\author[I. Alvarez-Romero]{Isaac Alvarez-Romero}
\address{Department of Mathematical Sciences,
Norwegian University of
Science and Technology, NO--7491 Trondheim, Norway}
\email{\mailto{isaac.romero@math.ntnu.no}\\
\mailto{isaacalrom@gmail.com}}

\author[G. Teschl]{Gerald Teschl}
\address{Faculty of Mathematics\\ University of Vienna\\
Oskar-Morgenstern-Platz 1\\ 1090 Wien\\ Austria\\ and International 
Erwin Schr\"odinger Institute for Mathematical Physics\\ 
Boltzmanngasse 9\\ 1090 Wien\\ Austria}
\email{\mailto{Gerald.Teschl@univie.ac.at}}
\urladdr{\url{http://www.mat.univie.ac.at/~gerald/}}

%\date{}
\thanks{{\it Research supported by the Norwegian Research Council project DIMMA  213638.}}
\thanks{J. Math. Anal. Appl. {\bf 449}, 580--588 (2017)}

\keywords{Schr\"odinger equation, uncertainty principle, Jacobi operators}
\subjclass[2010]{Primary \msc{33C45}, \msc{47B36}; Secondary \msc{81U99}, \msc{81Q05}}

\begin{abstract}
We prove that  a solution of the Schr\"odinger-type equation $\mathrm{i}\partial_t u= Hu$, where $H$ is a Jacobi operator with asymptotically constant coefficients,
cannot decay too fast at two different times unless it is trivial.
\end{abstract}

\maketitle

\section{Introduction}

The Hardy Uncertainty Principle has been studied by several authors in the continuous case, see for example the monograph \cite{HJ} or the recent articles \cite{CEKPV,EKPV} and the references therein.
The dynamic version for the free Schr\"odinger equation says that if $u(t,x)$ is a solution of $\partial_t u=\I\Delta u$ and $|u(0,x)|=O(\E^{-x^2/\beta^2})$, $|u(1,x)|=O(\E^{-x^2/\alpha^2})$, with $1/\alpha\beta>1/4$,
then $u\equiv 0$ and if $1/\alpha\beta=1/4$, then the initial data is a constant multiple of $\E^{-(1/\beta^2+\I/4)x^2}$.

Similar results for the discrete Schr\"odinger equation have been obtained recently \cite{FB,FB2,FB3,FBV,JLMP}.
In particular, our present paper is motivated by the following result from Jaming, Lyubarskii,  Malinnikova, and Perfekt \cite{JLMP}
for the discrete Laplacian, that is $\Delta_d f(n)=f(n-1)-2f(n)+f(n+1)$:

\begin{Th}[\cite{JLMP}]\label{Ith0}
Let $u(t,n)\in C^1(\mathbb{R},\ell^2(\mathbb{Z}))$ be a solution of
\begin{equation}\label{Ischeq0}
\I \partial_t u(t,n)=\Delta_d u(t,n)+V(n)u(t,n),\qquad n\in\mathbb{Z},\quad t\in[0,1],
\end{equation}
where the potential $V(n)$ is real-valued and compactly supported
(i.e.\ $V(n)\neq 0$ only for a finite number of $n$'s).
If for some  $\epsilon>0$, 
\begin{equation*}
|u(t,n)|<C\Big(\frac{\E}{(2+\epsilon)n}\Big)^n,\qquad t\in\{0,1\}, \quad n>0,
\end{equation*}
then $u\equiv 0$.
\end{Th}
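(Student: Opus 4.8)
The plan is to pass to generating functions and reduce the statement to a rigidity result for entire functions of exponential type.

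By Stirling's formula $\bigl(\tfrac{\E}{(2+\epsilon)n}\bigr)^n\asymp(1+n)^{1/2}(2+\epsilon)^{-n}/n!$, so the hypothesis says exactly that, for $t\in\{0,1\}$, the forward generating function $f_t(z):=\sum_{n\ge 0}u(t,n)z^n$ is entire of exponential type at most $1/(2+\epsilon)<\tfrac12$, with $|f_t(z)|\le C(1+|z|)^{1/2}\E^{|z|/(2+\epsilon)}$. The remaining tail $g_t(z):=\sum_{n<0}u(t,n)z^n=\sum_{m>0}u(t,-m)z^{-m}$ uses only $u(t,\cdot)\in\ell^2(\mathbb Z)$: it converges for $|z|>1$ with $|g_t(z)|\le\|u(t,\cdot)\|_{\ell^2}/(|z|-1)$, so it is bounded there and tends to $0$ as $|z|\to\infty$. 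On $|z|=1$ the Fourier symbol of $\Delta_d$ is $z+z^{-1}-2$ and the Fourier transform $P(t,z):=\sum_mV(m)u(t,m)z^m$ of $Vu(t,\cdot)$ is a Laurent polynomial (the support of $V$ being finite) whose coefficients are uniformly bounded and $C^1$ in $t$; Duhamel's formula then gives, on $|z|=1$ and hence by analytic continuation for $|z|>1$,
\[
f_1(z)+g_1(z)=\E^{-\I(z+z^{-1}-2)}\bigl(f_0(z)+g_0(z)\bigr)-\I\int_0^1\E^{-\I(1-s)(z+z^{-1}-2)}P(s,z)\,\mathrm ds .
\]
The feature that drives everything is the factorisation $\E^{-\I(z+z^{-1}-2)}=\E^{2\I}\,\E^{-\I z}\,\E^{-\I/z}$: $\E^{-\I z}$ is entire with an essential singularity at $\infty$, while $\E^{-\I/z}$ is analytic on $\mathbb C\setminus\{0\}$ with an essential singularity at $0$.

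Next I would collect the ``entire'' pieces into
\[
\Phi(z):=\E^{-\I(z+z^{-1}-2)}f_0(z)-f_1(z)-\I\int_0^1\E^{-\I(1-s)(z+z^{-1}-2)}P(s,z)\,\mathrm ds ,
\]
which is analytic on $\mathbb C\setminus\{0\}$ and, by the identity above, equals $g_1(z)-\E^{-\I(z+z^{-1}-2)}g_0(z)$ for $|z|>1$. Using $|\E^{-\I z}|=\E^{\operatorname{Im}z}$ and $|\E^{-\I/z}|=\E^{-\operatorname{Im}z/|z|^2}$ one checks, quadrant by quadrant, that in the upper half-plane $\E^{-\I(z+z^{-1}-2)}$ has modulus $\le 1$ and the $g$-representation gives $|\Phi(z)|\le C/(|z|-1)$ for $|z|>1$, while for $|z|\le 1$ the $f$-representation gives $|\Phi(z)|\le C$; likewise $|\Phi|\le C$ on $\mathbb R\setminus\{0\}$ and on the lower half of the unit disc. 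Hence $\Phi$ is bounded in a punctured neighbourhood of $0$ and, by Riemann's removable singularity theorem, extends to an \emph{entire} function; moreover it lies in $H^\infty$ of the upper half-plane, tends to $0$ at $\infty$ along the closed upper half-plane, and in the lower half-plane it has exponential type at most $1$, with indicator $\le(-\sin\varphi)^+$ supported on the lower directions.

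The heart of the matter is to promote this asymmetric behaviour of $\Phi$ to $f_0\equiv f_1\equiv 0$. Here one runs a Phragmén--Lindelöf argument in the lower half-plane combined with the inversion $z\mapsto 1/z$, which exchanges the essential singularities at $0$ and $\infty$, and --- crucially --- uses the strict inequality $1/(2+\epsilon)<\tfrac12$, which is the only point where $\epsilon>0$ enters, to conclude that $\E^{-\I(z+z^{-1}-2)}f_0(z)$ must be entire; since $\E^{-\I/z}$ has an essential singularity at $0$ this forces $f_0\equiv 0$, and applying the same reasoning to the time-reversed solution $\overline{u(1-t,\cdot)}$ gives $f_1\equiv 0$. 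Once $u(0,\cdot)$ and $u(1,\cdot)$ are supported on $\{n<0\}$ the argument closes softly: $\widehat{u(0,\cdot)}(z)=g_0(z)$ is then analytic for $|z|>1$ with $g_0(\infty)=0$, while $g_1(z)=\E^{-\I(z+z^{-1}-2)}g_0(z)$ is analytic at $\infty$, and since $\E^{-\I z}$ has an essential singularity at $\infty$ this forces $g_0\equiv 0$; thus $u(0,\cdot)\equiv 0$ and $u\equiv 0$ by uniqueness for the evolution.

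I expect the rigidity step to be the main obstacle: because the symbol of $\Delta_d$ carries essential singularities both at $0$ and at $\infty$, no single conformal change of variables tames it globally, so one must use Phragmén--Lindelöf on carefully chosen subdomains and keep track of the Duhamel correction (which is harmless, being an integral of $\E^{-\I(1-s)(z+z^{-1})}$ against a fixed-degree Laurent polynomial, of modulus $\le 1$ in the upper half-plane for $s\in[0,1]$). A more robust route --- and the one that should also carry over to general Jacobi operators with asymptotically constant coefficients --- is to first prove, by a logarithmic-convexity (Carleman) estimate for the weighted norms $\sum_n\varphi_\gamma(n)|u(t,n)|^2$ with $\varphi_\gamma(n)\asymp(\gamma\max(n,0))^{2\max(n,0)}$, that the fast decay propagates to every $t\in[0,1]$; then all correction terms are genuinely small and the complex-analytic argument above closes without delicate bookkeeping.
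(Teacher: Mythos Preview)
First, note that the paper does not prove this theorem: it is quoted from \cite{JLMP} as motivation, and the paper's own contribution is the generalization Theorem~\ref{Ith1}, whose proof uses Jost solutions and scattering theory rather than Duhamel's formula. So there is no ``paper's proof'' of Theorem~\ref{Ith0} to compare with directly; the closest analogue is the argument for Theorem~\ref{Ith1}.

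Your outline has a concrete sign error that breaks the boundedness claim for $\Phi$. You write $|\E^{-\I z}|=\E^{\operatorname{Im}z}$ and $|\E^{-\I/z}|=\E^{-\operatorname{Im}z/|z|^2}$; multiplying gives
\[
\bigl|\E^{-\I(z+z^{-1})}\bigr|=\exp\!\Bigl(\operatorname{Im}z\,\bigl(1-|z|^{-2}\bigr)\Bigr).
\]
In the upper half-plane with $|z|>1$ this is \emph{larger} than $1$, not $\le 1$. Hence in the $g$-representation $\Phi(z)=g_1(z)-\E^{-\I(z+z^{-1}-2)}g_0(z)$ the second term blows up like $\E^{y}$ along the positive imaginary axis, and the bound $|\Phi(z)|\le C/(|z|-1)$ fails. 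Swapping half-planes does not rescue the argument either: in the lower half-plane with $|z|<1$ the exponential again blows up (now near $z=0$), and the $f$-representation does not control it. The Duhamel term has a further issue: since $V$ may be supported on negative integers, $P(s,z)$ is a genuine Laurent polynomial with negative powers and need not stay bounded near $z=0$. So the claim that $\Phi$ extends across $0$ to an entire function is not established.

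Beyond this, the ``rigidity step'' you flag as the main obstacle is indeed where the argument must be made precise, and you do not carry it out; the Carleman alternative you sketch at the end is a different programme altogether. The paper's route (for the more general Theorem~\ref{Ith1}) sidesteps both difficulties: expanding $u(t,\cdot)$ against the Jost solution $e^-(\theta,\cdot)$ diagonalizes the \emph{full} operator $H$, so the time evolution becomes exact multiplication by $\E^{-\I\lambda(\theta)}$ with no Duhamel remainder; and the rigidity comes cleanly from the trigonometric-convexity inequality $h_f(\varphi)+h_f(\varphi+\pi)\ge 0$ for the indicator of a function analytic outside a disc, which turns the upper bound on the type (coming from the decay hypothesis) into a lower bound in the opposite direction that contradicts the relation $\Phi(1,\theta)=\E^{-\I\lambda(\theta)}\Phi(0,\theta)$. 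If you want to salvage your generating-function approach for the compactly supported case, you should likewise work with the Jost solutions (which are finite perturbations of $\theta^{\pm n}$ when $V$ has compact support) rather than with the bare Fourier modes $z^n$, and replace the ad hoc Phragm\'en--Lindel\"of bookkeeping by the indicator inequality.
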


Moreover, in \cite{JLMP} the question was raised to extend this result to the case of potentials with fast
decay, not necessarily compactly supported. It is the main purpose of the present paper to provide such an extension.
In fact, we will also be slightly more general and treat Jacobi operators
\begin{equation}
H f(n)=a(n)f(n+1)+a(n-1)f(n-1)+b(n)f(n)
\end{equation}
in the Hilbert space of square summable sequences $\ell^2(\mathbb{Z})$.

\begin{Th}\label{Ith1}
Let $u(t,n)\in C^1(\mathbb{R},\ell^2(\mathbb{Z}))$ be a solution of
\begin{equation}\label{Ischeq}
\I \partial_t u= H u.
\end{equation}
 Suppose that the sequences $a(n),b(n)$, which define the Jacobi operator $H$, fulfill 
\begin{itemize}
\item[(i)] $a,b\in \ell^\infty(\mathbb{Z},\mathbb{R}),$ $a(n)>0$ and $n(1-2a(n))$, $nb(n)\in \ell^1(\mathbb{Z})$
\item[(ii)] $\sum_{n\geq N}\big(|2a(n)-1|+|b(n)|\big)\leq C\frac{1}{N^{(1+\delta)2N}}$ for $N>0$, where $C,\delta>0$ are some given constants.
\end{itemize}
If for some $\epsilon>0$, $C>0$,
\begin{equation}\label{3uconv}
|u(t,n)|\leq C\Big(\frac{\E}{(4+\epsilon)n}\Big)^n,\qquad n>0,\qquad t\in\{0,1\},
\end{equation}
then $u\equiv 0$.
\end{Th}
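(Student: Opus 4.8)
The plan is to follow the general scheme of Jaming–Lyubarskii–Malinnikova–Perfekt, adapting each step to allow for non-compactly supported perturbations that decay superexponentially. The first step is a reduction: using assumption (i), the operator $H$ is a trace-class-type perturbation of the free discrete Laplacian $\Delta_d$ (up to an affine rescaling $H_0 f(n)=f(n+1)+f(n-1)$), so standard perturbation theory guarantees that $u$ is well-defined for all $t$ and that one may pass freely between $u(t,\cdot)$ and its generating function. The key analytic object will be the generating function $U(t,z)=\sum_{n} u(t,n) z^n$, or more precisely the Fourier series $\widehat u(t,\theta)=\sum_n u(t,n)\E^{\I n\theta}$; the decay hypothesis \eqref{3uconv} with the constant $4+\epsilon$ is exactly what is needed to make $U(t,z)$ extend to an entire function of $z$ of order $1$ and, crucially, of \emph{minimal type} (the factor $\E^n/((4+\epsilon)n)^n$ translates via Stirling into $|u(t,n)|\lesssim (C')^n/(2n)!$ roughly, forcing the associated entire function to grow slower than any $\E^{\lambda|z|}$). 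This is where the value $4$, rather than the $2$ appearing in Theorem~\ref{Ith0}, enters: it accounts for the two-sided lattice $\mathbb{Z}$ and the normalization $2a(n)\to 1$.

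The second step is to turn the PDE $\I\partial_t u = Hu$ into an evolution on the transform side. On the Fourier side the free part becomes multiplication by $2\cos\theta$, and the perturbation $W=H-H_0$ becomes a convolution operator with superexponentially decaying symbol. The strategy is to show that the map $t\mapsto \widehat u(t,\cdot)$, viewed as a function valued in a space of functions holomorphic in a strip $|\mathrm{Im}\,\theta|<R$, solves an ODE whose inhomogeneous term is controlled by hypothesis (ii); one then propagates the holomorphy/decay from $t=0$ and $t=1$ to an intermediate time, or better, derives a differential inequality for a weighted norm. Concretely, I would mimic the JLMP argument where one writes $u(t,n)$ via the Fourier representation $u(t,n)=\frac1{2\pi}\int_{-\pi}^{\pi} \widehat u(t,\theta)\E^{-\I n\theta}d\theta$ and, using that $\E^{-\I t H_0}$ acts by multiplication by $\E^{-2\I t\cos\theta}$, sets up a Duhamel formula
\begin{equation*}
\widehat u(1,\theta)=\E^{-2\I\cos\theta}\widehat u(0,\theta) - \I\int_0^1 \E^{-2\I(1-s)\cos\theta}\,\widehat{(Wu)}(s,\theta)\,ds .
\end{equation*}
The superexponential bound (ii) is precisely tuned so that $\widehat{(Wu)}(s,\cdot)$ is holomorphic and appropriately small in a strip whose width grows without bound, allowing the error term to be absorbed.

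The third step is the rigidity/Phragmén–Lindelöf mechanism. Having shown that $z\mapsto U(t,z)$ is entire of order $\le 1$ and minimal type for $t\in\{0,1\}$, and that the free evolution $U(0,z)\mapsto U(1,z)$ amounts (modulo the small perturbative error) to multiplication by $\E^{-\I(z+z^{-1})}$ on the generating-function side, one obtains a function that is simultaneously small at $t=0$, small at $t=1$, but whose free propagation would enlarge its indicator diagram in a way incompatible with minimal type unless it vanishes. The clean way to finish is: the perturbation being summable against the weight $n\mapsto n$ (hypothesis (i)) ensures the scattering/transfer picture makes sense, while the stronger bound (ii) forces the Jost-type solutions to be entire of minimal type as well; then a three-lines / hadamard-factorization argument on the product $U(0,z)U(1,1/z)$ (or an application of the continuous-case Phragmén–Lindelöf lemma of JLMP adapted to the annulus) yields $U\equiv 0$, hence $u\equiv 0$.

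\medskip

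The main obstacle I anticipate is Step two: controlling the perturbation term on the holomorphic side. In the compactly supported case of Theorem~\ref{Ith0} the symbol of $W$ is a trigonometric polynomial, so it extends to an entire function and the Duhamel term is harmless in any strip. Here $W$ only decays superexponentially, so its Fourier symbol is holomorphic merely in a strip of \emph{finite} width, and that width must be played off against the $n$-dependent region of holomorphy of $\widehat u(t,\cdot)$. Getting the bookkeeping right — showing that hypothesis (ii), with its exponent $(1+\delta)2N$, exactly dominates the growth of the relevant derivatives/coefficients of $\widehat u$ uniformly for $t\in[0,1]$ — is the technical heart of the argument, and is presumably why the hypothesis is stated with that precise superexponential rate rather than, say, mere exponential decay.
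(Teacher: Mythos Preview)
Your proposal has two genuine gaps that the paper's approach avoids.

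First, the one-sided decay. Hypothesis \eqref{3uconv} bounds $u(t,n)$ only for $n>0$; for $n<0$ you have nothing beyond $\ell^2$. Hence the two-sided generating function $U(t,z)=\sum_{n\in\mathbb{Z}}u(t,n)z^n$ does \emph{not} extend to an entire function of $z$, and your claim that it has ``minimal type'' is doubly off (even the one-sided tail gives type $1/(4+\epsilon)$, not zero; that is precisely the source of the constant $4$). The paper deals with this by transforming not against $z^n$ but against the Jost solution $e^-(\theta,n)$, which is automatically well-behaved as $n\to-\infty$; on the positive side one then invokes the scattering relation $e^-=\alpha\,e^+(\theta^{-1},\cdot)+\beta_+\,e^+(\theta,\cdot)$ to express everything in terms of $e^+$, whose analytic extension is controlled by hypothesis (ii) via Lemma~\ref{lembeta}.

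Second, and more seriously, your Duhamel scheme requires bounds on $\widehat{(Wu)}(s,\theta)$ \emph{uniformly for $s\in[0,1]$}, hence superexponential decay of $u(s,n)$ at intermediate times. The hypothesis gives this only at $t=0$ and $t=1$, and such decay is \emph{not} propagated by the flow (Remark~(v) shows only exponential decay persists). The paper sidesteps the entire issue: because the transform $\Phi(t,\theta)=\sum_n u(t,n)e^-(\theta,n)$ diagonalizes $H$ exactly, one has the \emph{exact} identity $\Phi(1,\theta)=\E^{-\I\lambda(\theta)}\Phi(0,\theta)$ with no error term, so only the endpoint data is ever used. Hypothesis (ii) is not there to absorb a Duhamel remainder; it is there to make $e^+(\cdot,n)$ and $\beta_+$ entire of zero type, so that $\Phi(t,\theta^{-1})$ is analytic for $|\theta|>\rho$ with indicator $\le 1/(4+\epsilon)$. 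The contradiction then follows from the trigonometric-convexity inequality $h(\varphi)+h(\varphi+\pi)\ge 0$ applied along the imaginary axis, since $\E^{-\I\lambda(\theta)}$ contributes $1/2$ to the indicator there and $1/2 - 1/(4+\epsilon) > 1/(4+\epsilon)$.
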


\begin{Rem}
\begin{enumerate}
\item Condition (i) is used to assure the existence of the Jost solutions for the Jacobi operator associated to \eqref{Ischeq}.
Condition (ii) is used to ensure an analytic extension of one reflection coefficient to the
interior of the punctured unit disk.
\item
The case where $(a(n),b(n))$ approach limits different from $(\frac{1}{2},0)$ can be easily reduced to this case using that $v(t,n)=u(\alpha t,n) \E^{-\I \beta t}$ solves
$\I\partial_t v= (\alpha H +\beta)v$.
\item
In the case of two arbitrary times $t_0<t_1$ the condition reads
\begin{equation*}
|u(t,n)|\leq C\left(\frac{(t_1-t_0)\E}{(4+\epsilon)n}\right)^n,\qquad n>0,\qquad t\in\{t_0,t_1\}.
\end{equation*}
\item
By reflecting the coefficients $\tilde{a}(n)=a(-n-1)$, $\tilde{b}(n)=b(-n)$ such that $\tilde{u}(t,n)=u(t,-n)$ solves $\I\partial_t \tilde{u}=\tilde{H} \tilde{u}$
we get a corresponding result on the negative half line.
\item
Let $w(n)\ge 1$ be some weight with $\sup_n( |\frac{w(n+1)}{w(n)}| + |\frac{w(n)}{w(n+1)}| )<\infty$ and
fix some $1 \le p \le \infty$. Set
\[
\|u\|_{w,p} = \begin{cases}
\left(\sum\limits_{n \in \mathbb{Z}} w(n) |u(n)|^p \right)^{1/p}, & 1\le p <\infty\\
\sup\limits_{n \in \mathbb{Z}} w(n) |u(n)|, & p=\infty.
\end{cases}
\]
Then one can solve \eqref{Ischeq} in the corresponding space $\ell^{w,p}(\mathbb{Z})$ and get a unique global solution
in these Banach spaces (note that our assumption ensures that the shift operators are continuous with respect to these norms).
This shows that certain decay rates (up to exponential type) are preserved by the time evolution.
\end{enumerate}
\end{Rem}

To prove this theorem we follow a similar strategy as in \cite{JLMP} using growth of entire functions and scattering theory of Jacobi operators.
It will be given in Section~\ref{secproof}.

We also mention another simple unique continuation type result inspired by \cite{KT}.

\begin{Th}
Let $u(t,n),v(t,n)\in \ell^2(C^1[0,1],\mathbb{Z})$ be strong solutions of
\begin{equation}
\I \partial_t u= H u.
\end{equation}
Suppose $a,b\in \ell^\infty(\mathbb{Z},\mathbb{R})$, $a(n)>0$. Given $n_0\in\mathbb{Z}$ and $t_0<t_1$
\begin{equation}
u(t,n)=v(t,n) \qquad\text{for}\quad n\in\{n_0,n_0+1\},\: t\in(t_0,t_1)
\end{equation}
implies $u\equiv v$.
\end{Th}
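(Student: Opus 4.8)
The plan is to reduce to a single vanishing solution and then to exploit the three‑term recurrence structure of $H$. Set $w(t,n)=u(t,n)-v(t,n)$. Then $w$ is again a strong solution of $\I\partial_t w=Hw$ and, by hypothesis, $w(t,n_0)=w(t,n_0+1)=0$ for every $t\in(t_0,t_1)$; it suffices to prove $w\equiv 0$.

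\emph{Step 1: propagation in the lattice variable.} Fix $t\in(t_0,t_1)$. Since the map $t\mapsto w(t,n_0)$ vanishes identically on the open interval $(t_0,t_1)$, so does its derivative there, and evaluating the equation at the site $n_0$ gives
\[
0=\I\partial_t w(t,n_0)=(Hw)(t,n_0)=a(n_0)w(t,n_0+1)+a(n_0-1)w(t,n_0-1)+b(n_0)w(t,n_0).
\]
Because $w(t,n_0)=w(t,n_0+1)=0$ and $a(n_0-1)>0$, this forces $w(t,n_0-1)=0$. Running the same computation at the site $n_0+1$ and using $a(n_0+1)>0$ yields $w(t,n_0+2)=0$. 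Now $w$ again vanishes on $(t_0,t_1)$ at two consecutive sites ($n_0-1,n_0$ on the left, $n_0+1,n_0+2$ on the right), so an induction in both directions gives $w(t,n)=0$ for all $n\in\mathbb{Z}$ and all $t\in(t_0,t_1)$.

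\emph{Step 2: propagation in time.} Choose any $s\in(t_0,t_1)$; by Step 1, $w(s,\cdot)=0$ in $\ell^2(\mathbb{Z})$. Since $a,b\in\ell^\infty(\mathbb{Z},\mathbb{R})$ the operator $H$ is bounded (and self‑adjoint, as $a$ is real and positive), so the Cauchy problem for $\I\partial_t w=Hw$ is uniquely solvable; concretely, $\frac{d}{dt}\|w(t,\cdot)\|^2=2\,\mathrm{Im}\langle Hw(t,\cdot),w(t,\cdot)\rangle=0$, or, not using self‑adjointness, $\frac{d}{dt}\|w(t,\cdot)\|^2\le 2\|H\|\,\|w(t,\cdot)\|^2$ together with Gronwall's inequality. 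Since $\|w(s,\cdot)\|=0$ we conclude $w(t,\cdot)=0$ for all $t\in[0,1]$. Hence $u\equiv v$.

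The argument is essentially elementary (this is a discrete analogue of the Kenig--Tao type statement cited); the one point deserving care is the coupling of the two mechanisms — the recurrence transmits the vanishing instantaneously across the whole lattice (a discrete manifestation of the absence of finite speed of propagation), while well‑posedness transmits it in time — together with the verification that the assumed $C^1$‑regularity in $t$ is precisely what legitimizes the passage from ``$w(\cdot,n_0)$ vanishes on an interval'' to ``$\partial_t w(\cdot,n_0)$ vanishes there.'' Notably, nothing beyond $a,b\in\ell^\infty(\mathbb{Z},\mathbb{R})$ and $a>0$ is used; in particular no decay of the coefficients is needed.
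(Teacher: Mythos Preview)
Your proof is correct and follows the same approach as the paper: reduce to $w=u-v$, use the three-term recurrence with $a(n)>0$ to propagate the vanishing from two adjacent sites to all of $\mathbb{Z}$ for $t\in(t_0,t_1)$, and then invoke uniqueness of the Cauchy problem. The paper's proof is terser and leaves your Step~2 (time propagation via boundedness/self-adjointness of $H$) implicit, so your write-up is in fact a more complete version of the same argument.
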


\begin{proof}
Consider $w(t,n)=u(t,n)-v(t,n)$. Then plugging the assumption $w(t,n)=0$ for $n=n_0,n_0+1$, $t\in(t_0,t_1)$ into the
differential equation implies $w(t,n)=0$ for $n=n_0-1$, $t\in(t_0,t_1)$ as well as for $n=n_0+2$, $t\in(t_0,t_1)$.
Hence the claim follows by applying this argument recursively.
\end{proof}

%%%%%%%
%%%%%%%
%%%%%%%

\section{Preliminaries}

In this section we are going to collect some results on the growth of entire functions, all of which can be found in \cite{Levin}, especially in lectures 1 and 8.
We will also give a brief introduction to Jacobi operators and their Jost solutions which can be found in Chapter~10 of \cite{Tes}.

\subsection{Growth of entire functions}

Let $f(z)$ be an entire function. We say that $f$ is of \textbf{exponential type} $\sigma_f$ if for $|z|$ big enough and some $\sigma>0$ we always have
\begin{equation}\label{Gdecay}
|f(z)|<\exp(\sigma |z|).
\end{equation}
The type $\sigma_f$ of the function $f$ is defined by
\begin{equation*}
\sigma_f= \limsup_{r\to\infty} \frac{\log \max\{|f(r \E^{\I\varphi})|:\varphi\in[0,2\pi]\}}{r}
\end{equation*}

\begin{Th}\label{21cauchy}
Let $f(z)=\sum_{n\geq 0}c_nz^n$, be an entire function, then the type of $f$ can be determined via the formula
\begin{equation}
\limsup_{n\to\infty} n|c_n|^{1/n}=\E\, \sigma_f
\end{equation}
\end{Th}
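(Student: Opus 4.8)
The plan is to prove the two inequalities $\limsup_{n\to\infty} n|c_n|^{1/n}\le \E\,\sigma_f$ and $\E\,\sigma_f\le \limsup_{n\to\infty} n|c_n|^{1/n}$ separately, the claimed identity then being immediate (with the convention $\infty=\infty$ when $f$ is not of finite exponential type). The bridge between the coefficients and the type is the maximum modulus $M(r)=\max\{|f(r\E^{\I\varphi})|:\varphi\in[0,2\pi]\}$, whose exponential growth rate is $\sigma_f$ by definition.

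For the first inequality I would assume $\sigma_f<\infty$ and fix an arbitrary $\sigma>\sigma_f$, so that $M(r)\le\E^{\sigma r}$ for all sufficiently large $r$. Cauchy's estimate gives $|c_n|\le M(r)r^{-n}$ for every $r>0$, hence $|c_n|\le\E^{\sigma r}r^{-n}$ for large $r$; minimizing the right-hand side over $r$ (the minimum of $r\mapsto\E^{\sigma r}r^{-n}$ is at $r=n/\sigma$) yields $|c_n|\le(\E\sigma/n)^n$ for all large $n$, i.e.\ $n|c_n|^{1/n}\le\E\sigma$ eventually. Passing to the $\limsup$ and then letting $\sigma\downarrow\sigma_f$ gives $\limsup_n n|c_n|^{1/n}\le\E\,\sigma_f$.

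For the converse I would set $\lambda:=\limsup_n n|c_n|^{1/n}$, assume $\lambda<\infty$, and fix $\mu>\lambda$, so that $|c_n|\le(\mu/n)^n$ for all $n\ge n_0$. Then for every $r>0$
\[
M(r)\le\sum_{n<n_0}|c_n|r^n+\sum_{n\ge n_0}\Big(\frac{\mu r}{n}\Big)^n .
\]
The first sum is a polynomial in $r$. In the second sum the term $(\mu r/n)^n$, as a function of $n$, is bounded by its maximal value $\E^{\mu r/\E}$ (attained near $n=\mu r/\E$), while for $n\ge\E^2\mu r$ it is at most $\E^{-2n}$, so the corresponding tail is exponentially small. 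Splitting the sum at $n=\E^2\mu r$ and bounding the number of remaining terms by $O(r)$ gives $\sum_{n\ge n_0}(\mu r/n)^n\le C\,r\,\E^{\mu r/\E}+O(1)$, whence $M(r)\le\E^{(\mu/\E+\varepsilon)r}$ for every $\varepsilon>0$ and all large $r$. Therefore $\sigma_f\le\mu/\E$, and letting $\mu\downarrow\lambda$ gives $\E\,\sigma_f\le\lambda$, which finishes the proof. The degenerate cases $\sigma_f=0$ and $\lambda=0$ are covered by the same reasoning applied to every $\sigma>0$, resp.\ every $\mu>0$.

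The one point that requires a little care is the last estimate, namely that the series $\sum_n(\mu r/n)^n$ grows no faster than $\E^{\mu r/\E}$ up to a factor polynomial in $r$; this is elementary but needs the split into the $O(r)$ terms clustered around the maximum and the geometrically decaying tail beyond it. Everything else reduces to Cauchy's inequality and the minimization of an explicit function of a single real variable.
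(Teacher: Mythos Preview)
Your argument is correct and is essentially the classical proof of this formula. Note, however, that the paper does not supply its own proof of this statement: it is listed among the preliminaries ``collected'' from Levin's \emph{Lectures on Entire Functions} (Lectures~1 and~8), so there is no in-paper proof to compare against. Your write-up is in fact the standard derivation one finds in that reference---Cauchy's inequality optimized at $r=n/\sigma$ for one direction, and the estimate of $\sum_n(\mu r/n)^n$ via its maximal term $\E^{\mu r/\E}$ plus a geometric tail for the other.
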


So far we have considered the growth of $f(z)$ in all directions simultaneously, but it may happen that the function behaves different along different directions. To this end we introduce the indicator function
\begin{equation}\label{Gindi}
h_f(\varphi)=\limsup_{r\to\infty}\frac{\log|f(r \E^{\I\varphi})|}{r},
\end{equation}
where $\varphi$ denotes the direction we are interested in, i.e.\ $\arg(z)=\varphi$.

It follows from the definition that
\begin{equation}\label{Gsuma}
h_{f+g}\leq\max(h_f,h_g)
\end{equation}
and
\begin{equation}\label{Gprod}
h_{fg}\leq h_f+h_g.
\end{equation}

\begin{Def}
A function $K(\theta)$ is called trigonometrically convex on the closed segment  $[\alpha,\beta]$ if for $\alpha\leq \theta_1<\theta_2\leq\beta$, $0<\theta_2-\theta_1<\pi$ we have
\[
K(\theta)\leq \frac{K(\theta_1)\sin(\theta_2-\theta)+K(\theta_2) \sin(\theta-\theta_1)}{\sin(\theta_2-\theta_1)},\qquad \theta_1\leq \theta\leq \theta_2.
\]
\end{Def}

\begin{Th}\label{Gindicator}
Let $f(z)$ be an entire function of exponential type. Then its indicator function $h_f$ is a trigonometrically convex function.
\end{Th}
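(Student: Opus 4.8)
The plan is to derive trigonometric convexity of $h_f$ from the Phragmén--Lindelöf principle, as in \cite[Lecture~8]{Levin}. The first step is a reformulation. Observe that the function $\theta\mapsto\frac{h_f(\theta_1)\sin(\theta_2-\theta)+h_f(\theta_2)\sin(\theta-\theta_1)}{\sin(\theta_2-\theta_1)}$ is of the form $a\cos\theta+b\sin\theta$ and agrees with $h_f$ at $\theta_1$ and at $\theta_2$; hence it suffices to prove that whenever $H(\theta)=a\cos\theta+b\sin\theta$ satisfies $H(\theta_1)\ge h_f(\theta_1)$ and $H(\theta_2)\ge h_f(\theta_2)$ for some $\theta_1<\theta_2$ with $\theta_2-\theta_1<\pi$, then $H(\theta)\ge h_f(\theta)$ for all $\theta\in[\theta_1,\theta_2]$. (If $f\equiv0$ there is nothing to prove, and for $f\not\equiv0$ of exponential type $h_f$ is finite everywhere, so the interpolating sinusoid is well defined.)

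For the main step, write $H(\theta)=\mathrm{Re}(\gamma\,\E^{\I\theta})$ with a suitable $\gamma\in\mathbb{C}$ and put $F(z)=f(z)\,\E^{-\gamma z}$, which is again entire of exponential type. Since $|\E^{-\gamma z}|=\E^{-rH(\theta)}$ for $z=r\E^{\I\theta}$, dividing $\log|F(r\E^{\I\theta})|$ by $r$ and taking the $\limsup$ gives $h_F(\theta)=h_f(\theta)-H(\theta)$; in particular $h_F(\theta_1)\le0$ and $h_F(\theta_2)\le0$, and the claim reduces to $h_F(\theta)\le0$ on $(\theta_1,\theta_2)$. Now fix $\mu>0$, set $\phi=\frac{\theta_1+\theta_2}{2}$ and $\kappa=\cos\frac{\theta_2-\theta_1}{2}>0$, and consider $\Phi_\mu(z)=F(z)\,\E^{-\mu\E^{-\I\phi}z}$, still of exponential type. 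On the ray $\arg z=\theta_j$ one has $|\Phi_\mu(r\E^{\I\theta_j})|=|F(r\E^{\I\theta_j})|\,\E^{-\mu\kappa r}$, which tends to $0$ as $r\to\infty$ because $h_F(\theta_j)\le0$; being continuous and decaying along both edges, $\Phi_\mu$ is bounded by some constant $M$ on the boundary of the sector $G=\{z\ne0:\theta_1<\arg z<\theta_2\}$. For a sector of opening $\pi/\rho$ the Phragmén--Lindelöf principle applies to functions of order below $\rho$; here $\rho=\pi/(\theta_2-\theta_1)>1$ and $\Phi_\mu$ has order $1<\rho$, so it yields $|\Phi_\mu|\le M$ throughout $G$. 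Hence $|F(r\E^{\I\theta})|\le M\,\E^{\mu\kappa r}$ for $\theta\in(\theta_1,\theta_2)$, so $h_F(\theta)\le\mu\kappa$; as $M$ does not depend on $r$ and $\mu>0$ is arbitrary, letting $\mu\downarrow0$ gives $h_F(\theta)\le0$, i.e.\ $h_f(\theta)\le H(\theta)$. Together with the hypothesis at the two endpoints this is the desired inequality.

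The delicate point is the Phragmén--Lindelöf step: along the two bounding rays the hypothesis controls only the \emph{type} of $F$ (a $\limsup$ in $r$), not an honest uniform bound, which is precisely why the auxiliary factor $\E^{-\mu\E^{-\I\phi}z}$ --- chosen, using $\theta_2-\theta_1<\pi$, so that it decays along both edges of $G$ --- has to be inserted before applying the principle and then removed by letting $\mu\to0$; one must also check that the exponential growth of $f$ is genuinely subordinate to the opening of the sector, which again forces the restriction $\theta_2-\theta_1<\pi$. The remaining steps are routine manipulations with the indicator function and with sinusoids.
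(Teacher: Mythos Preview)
Your argument is correct and is precisely the Phragm\'en--Lindel\"of proof from \cite[Lecture~8]{Levin} that the paper cites; the paper itself does not give a proof but only quotes the result, noting in the subsequent remark that Phragm\'en--Lindel\"of is the key ingredient. One small slip: from $|\Phi_\mu|\le M$ in the sector you get $|F(r\E^{\I\theta})|\le M\,\E^{\mu r\cos(\theta-\phi)}\le M\,\E^{\mu r}$ (not $M\,\E^{\mu\kappa r}$, since $\cos(\theta-\phi)\ge\kappa$ on the interior, which is the wrong direction), but this still yields $h_F(\theta)\le\mu$ and hence $h_F(\theta)\le0$ after $\mu\downarrow0$, so the conclusion is unaffected.
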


As a consequence we note

\begin{Cor}
Let $f(z)$ be an entire function of exponential type, then
\begin{equation}\label{Gindic}
h_f(\varphi)+h_f(\pi+\varphi)\geq 0.
\end{equation}
\end{Cor}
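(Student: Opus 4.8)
The plan is to read off \eqref{Gindic} directly from the trigonometric convexity of the indicator function supplied by Theorem~\ref{Gindicator}, by applying the defining inequality of a trigonometrically convex function to a triple of directions symmetric about a fixed direction and then letting the opening angle increase to $\pi$.

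First I would dispose of the trivial case: if $f\equiv 0$ there is nothing of interest, so assume $f\not\equiv 0$. Since $f$ has exponential type, $h_f(\varphi)\le\sigma_f<\infty$ for all $\varphi$, and a trigonometrically convex function on $\mathbb{R}$ that is bounded above and not identically $-\infty$ is finite everywhere; moreover, as for ordinary convex functions, a trigonometrically convex function is continuous. Thus $h_f$ is a genuine finite, continuous function on $\mathbb{R}$, which is what makes the limiting argument below legitimate.

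Now fix $\varphi$ and put $\theta=\varphi+\tfrac{\pi}{2}$. For $0<h<\tfrac{\pi}{2}$ apply the definition of trigonometric convexity (valid here on any segment by Theorem~\ref{Gindicator}) with $\theta_1=\theta-h$, $\theta_2=\theta+h$, so that $\theta_1\le\theta\le\theta_2$ and $0<\theta_2-\theta_1=2h<\pi$. Using $\sin(\theta_2-\theta)=\sin(\theta-\theta_1)=\sin h$ and $\sin(\theta_2-\theta_1)=2\sin h\cos h$, the inequality becomes
\[
h_f(\theta)\le\frac{h_f(\theta-h)+h_f(\theta+h)}{2\cos h},
\]
and since $\cos h>0$ on $(0,\tfrac{\pi}{2})$ we may multiply through to obtain $2\cos(h)\,h_f(\theta)\le h_f(\theta-h)+h_f(\theta+h)$.

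Finally, let $h\uparrow\tfrac{\pi}{2}$. The left-hand side tends to $2\cos(\tfrac{\pi}{2})h_f(\theta)=0$, while by continuity of $h_f$ the right-hand side tends to $h_f(\theta-\tfrac{\pi}{2})+h_f(\theta+\tfrac{\pi}{2})=h_f(\varphi)+h_f(\varphi+\pi)$, which yields \eqref{Gindic}. There is no real obstacle here beyond the preliminary point of knowing that $h_f$ is finite and continuous; this is exactly where the exponential-type hypothesis is used, since for such $f$ the indicator is a bona fide trigonometrically convex function rather than merely an extended-real-valued $\limsup$. Everything else is elementary trigonometry together with the passage to the limit.
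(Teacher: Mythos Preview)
Your argument is correct and is precisely the route the paper has in mind: the paper does not spell out a proof of the corollary but merely records it ``as a consequence'' of Theorem~\ref{Gindicator}, i.e.\ of the trigonometric convexity of $h_f$, and your symmetric-triple computation followed by the limit $h\uparrow\pi/2$ is the standard way to extract \eqref{Gindic} from that convexity. The only nontrivial point---that $h_f$ is finite and continuous so the limiting step is legitimate---you have identified and justified correctly via the exponential-type hypothesis.
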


\begin{Rem}
The key part of the proof of the Theorem~\ref{Gindicator} is the Phragm\'en--Lindel\"of theorem, thus one can easily adapt the proof of Theorem~1 from Chapter~8 in \cite{Levin}
to show that it continuous to hold if $f$ is only analytic in a region $\{z: |z| > \rho\}$. In particular, inequality \eqref{Gindic} is still true in this case.
\end{Rem}

%%%%%%

\subsection{Jacobi operators and Jost solutions}

Suppose 
\begin{equation*}
a,b\in \ell^\infty(\mathbb{Z},\mathbb{R}),\qquad a(n)> 0
\end{equation*}
and consider the associated self-adjoint Jacobi operator 
\begin{equation*}
\begin{split}
H:\quad\ell^2(\mathbb{Z})&\to \ell^2(\mathbb{Z}),\\
f&\mapsto \tau f,
\end{split}
\end{equation*}
where
\begin{equation*}
\tau f(n)=a(n)f(n+1)+a(n-1)f(n-1)+b(n)f(n).
\end{equation*}
In fact, we will make the stronger assumption
\begin{equation}\label{Jhyp2}
n(2a(n)-1)\in \ell^1(\mathbb{Z}),\quad nb(n)\in \ell^1(\mathbb{Z}).
\end{equation}
We recall \cite{Tes} that under this assumption the spectrum of $H$ consists of an purely absolutely continuous part
covering $[-1,1]$ plus a finite number of discrete eigenvalues in $\mathbb{R}\setminus[-1,1]$. The associated
spectral equation is
\begin{equation}\label{Jspec}
\tau f = \lambda f
\end{equation}
where $\lambda $ is a complex number and there are two independent solutions. The Wronskian of two solutions is given by
\begin{equation*}
W(f,g)=a(n)\big(f(n)g(n+1)-g(n)f(n+1)\big)
\end{equation*}
and does not depend on $n$ if $f,g$ both solve \eqref{Jspec}. Instead of $\lambda$ it is more convenient to use
$\theta\in\mathbb{T}:=\{z: |z|=1\}$ given by
\begin{align*}
\begin{split}
\lambda:\quad &\mathbb{T}\to [-1,1],\\
&\theta\mapsto \lambda(\theta):=\frac{1}{2}(\theta+\theta^{-1}).
\end{split}
\end{align*}

\begin{Th}
Let $a(n),b(n)$ be as in \eqref{Jhyp2}, then there exists solutions to \eqref{Jspec}, called Jost solutions, $e^\pm(\theta,n)$, $0<|\theta|\leq 1$, fulfilling
\begin{equation*}
\lim_{n\to\pm\infty}e^{\pm}(\theta,n)\theta^{\mp n}=1,\quad 0<|\theta|\leq 1.
\end{equation*}
\end{Th}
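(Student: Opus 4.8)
The plan is to construct the Jost solutions by the standard fixed‑point/iteration argument and track the decay estimates carefully enough to get the analyticity statement. I will only sketch the right half‑line construction for $e^+(\theta,n)$; the case of $e^-$ is symmetric.

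First I would make the ansatz $e^+(\theta,n)=\theta^n\bigl(1+\sum_{m\ge 1}K(n,m)\theta^{2m}\bigr)$, or equivalently look for $e^+(\theta,n) = \theta^n + \sum_{m=n+1}^\infty K(n,m)\theta^m$ with $K(n,\cdot)$ supported on $m\ge n$ and $m\equiv n \pmod 2$. Plugging this into the difference equation $\tau f=\lambda f$ with $\lambda=\tfrac12(\theta+\theta^{-1})$ and comparing coefficients of powers of $\theta$ turns \eqref{Jspec} into a discrete Volterra‑type summation equation for the kernel $K(n,m)$, whose inhomogeneous term is built from $(2a(n)-1)$ and $b(n)$. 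Alternatively — and this is the cleaner route to follow — one writes the equation directly for the sequence $u(n):=e^+(\theta,n)\theta^{-n}$, obtaining an equation of the form $u(n)=1+\sum_{j>n} G(n,j;\theta)\, \bigl((2a(j)-1)\text{ or }b(j)\text{-terms}\bigr)\,u(j)$, where the discrete Green's kernel $G$ is a finite sum of powers $\theta^{k}$ with $0\le k \le 2(j-n)$ coming from inverting the constant‑coefficient part $f(n+1)+f(n-1)=(\theta+\theta^{-1})f(n)$. The point of assumption (i), i.e.\ $n(2a(n)-1),\,nb(n)\in\ell^1$, is exactly that the Neumann series for this Volterra equation converges for every $\theta$ with $0<|\theta|\le 1$, giving existence of $e^+(\theta,n)$ and the normalization $\lim_{n\to+\infty}e^+(\theta,n)\theta^{-n}=1$.

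Next I would establish analyticity/continuity in $\theta$. For each fixed $n$, the $m$‑th iterate of the Volterra operator is a sum over $n<j_1<\cdots<j_m$ of products of the coefficient factors times $\theta^{k}$ with $0\le k\le 2(j_m-n)$; each such term is a polynomial in $\theta$ (no negative powers, since the Green's kernel inverting the free equation can be chosen without them once we peel off the $\theta^n$ prefactor), so $u(n)$ is represented by a power series in $\theta$. The uniform $\ell^1$ bound from (i) gives local uniform convergence of the Neumann series on the closed punctured disk $0<|\theta|\le 1$, hence $\theta\mapsto e^+(\theta,n)\theta^{-n}$ extends to an analytic function on $\{0<|\theta|<1\}$, continuous up to $|\theta|=1$, for each $n$. (The symmetric construction on the left gives $e^-$, but there one typically only gets analyticity in the disk minus the origin, and condition (ii) — not needed here — is what later upgrades the behavior of the relevant reflection coefficient near $\theta=0$.)

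The main obstacle is bookkeeping rather than conceptual: one has to keep precise track of which powers of $\theta$ appear at each iteration step and verify that no $\theta^{-1}$ creeps in, and simultaneously produce the quantitative bound on $|e^+(\theta,n)\theta^{-n}-1|$ — something like $\prod_{j\ge n}(1+C|2a(j)-1|+C|b(j)|)-1$, which is finite by (i) — uniformly for $|\theta|\le 1$. This is precisely the content of Chapter~10 of \cite{Tes}, so in the write‑up I would set up the Volterra equation, invoke the standard iteration estimate, and then read off existence, the asymptotic normalization, and the claimed analyticity in $\theta$, referring to \cite{Tes} for the routine inductive estimates.
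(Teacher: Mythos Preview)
The paper does not prove this theorem at all: it is quoted as a known result from Chapter~10 of \cite{Tes}, and the surrounding material (the Fourier representation \eqref{22fourier} and the bounds \eqref{22bddcoe}--\eqref{22bddcmas}) is simply recorded from that reference. Your sketch is exactly the standard Volterra/iteration construction carried out there, so your approach coincides with the cited one; the only cosmetic discrepancy is the normalization factor $A_+(n)=\prod_{m\ge n}2a(m)$ in \eqref{22fourier}, which your ansatz omits but which appears naturally once you track the leading coefficient through the recursion.
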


We can write the Jost solutions in terms of Fourier series via
\begin{equation}\label{22fourier}
\begin{split}
e^{+}(\theta,n)&=\frac{\theta^n}{A_{+}(n)}\big(1+\sum_{j=1}^\infty K_{+,j}(n)\theta^j\big),\text{  }|\theta|\leq 1,\\
e^{-}(\theta,n)&=\frac{\theta^{-n}}{A_{-}(n)}\big(1+\sum_{j=1}^\infty K_{-,j}(n)\theta^j\big),\quad |\theta|\leq 1,
\end{split}
\end{equation}
where $A_{-}(n)=\prod_{m=-\infty}^{n-1}2a(m)$, and $A_{+}(n)=\prod_{m=n}^\infty 2a(m)$. Notice that $A_{\pm}(n)$ are uniformly bounded due to \eqref{Jhyp2}.
For later use we will also set $K_{\pm,0}(n):=1$.

Moreover, the coefficients $K_{+,j}(n)$ are bounded by
\begin{equation}\label{22bddcoe}
|K_{+,j}(n)|\leq D_{+,j}(n)C_{+}(n+\floor{\frac{j}{2}}+1),\qquad j\in\mathbb{N},
\end{equation}
where
\begin{equation}\label{22bddcmas}
C_{+}(n)=\sum_{m=n}^\infty c(m),\quad
D_{+,m}(n)=\prod_{j=1}^{m-1}(1+C_{+}(n+j)), \quad
c(n)=2|b(n)|+|4a(n)^2-1|
\end{equation}
and $\floor{x}=\max \{ n \in \mathbb{Z} | n \leq x\}$ is the usual floor function.
Notice that $\{D_{+,m}(n)\}_{m,n\in\mathbb{N}}$ is a bounded set. For $K_{-,j}(n)$ we have analogous results.

We already know that the Wronskian does not depend on $n$, whence we observe that the Jost solutions $e^{\pm}(\theta,n),e^{\pm}(\theta^{-1},n)$ are independent for $|\theta|=1$, $\theta^2\neq 1$:

\begin{equation*}
W(e^{\pm}(\theta),e^{\pm}(\theta^{-1}))=\pm\frac{1-\theta^2}{2\theta}.
\end{equation*}
Moreover, they can be expressed as
\begin{equation}\label{scatrel}
e^{\pm}(\theta,n)=\alpha(\theta)e^{\mp}(\theta^{-1},n)+\beta_{\mp}(\theta)e^{\mp}(\theta,n),\quad |\theta|=1,
\end{equation}
where
\begin{equation}\label{Jscat}
\begin{split}
\alpha(\theta)&=\frac{W(e^{\mp}(\theta),e^{\pm}(\theta))}{W(e^{\mp}(\theta),e^{\mp}(\theta^{-1}))}=\frac{2\theta}{1-\theta^2}W(e^{+}(\theta),e^{-}(\theta))\\
\beta_{\pm}(\theta)&=\frac{W(e^{\mp}(\theta),e^{\pm}(\theta^{-1}))}{W(e^{\pm}(\theta),e^{\pm}(\theta^{-1}))}=\pm\frac{2\theta}{1-\theta^2}W(e^{\mp}(\theta),e^{\pm}(\theta^{-1}))
\end{split}
\end{equation}
Our assumption \eqref{Jhyp2} implies $c \in \ell^1(\mathbb{Z})$ and hence $e^{\pm}(.,n)$ are analytic inside the unit disc $\mathbb{D} := \{z : |z|<1\}$ and continuous up to the boundary. Consequently
$\alpha$ is analytic inside the unit disc
\begin{equation}\label{alpha20}
\alpha(\theta)=\frac{1}{A}\sum_{j\geq 0}K_{j}\theta^j, \qquad A=\prod_{m=-\infty}^\infty2a(m)>0,
\end{equation}
with $K_j=\lim_{n\to\mp\infty}K_{\pm,j}$ (in particular $K_0=1$). The only zeros inside $\mathbb{D}$ of $\alpha$ are the eigenvalues and hence there are only finitely many. For later use we record
the trivial consequence
\begin{equation}\label{3amas}
\limsup_{|\theta|\to\infty}\frac{\log|\alpha(\theta^{-1})|}{|\theta|}= 0.
\end{equation}
Moreover, the additional assumption
\begin{equation}\label{Sdec0}
\sum_{n\geq N}\big(2|b(n)|+|4a(n)^2-1|\big)\leq \frac{C}{N^{(1+\delta)2N}},\quad N>0,
\end{equation}
implies 

\begin{Lemma}\label{lembeta}
Under the assumptions \eqref{Jhyp2} and \eqref{Sdec0} we have that $e^+(.,n)$ is an entire function satisfying
\begin{equation}
\limsup_{|\theta|\to\infty} \frac{\log|e^+(\theta,n)|}{|\theta|}\leq 0.
\end{equation}
\end{Lemma}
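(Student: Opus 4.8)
The plan is to read everything off from the Fourier expansion \eqref{22fourier}, the coefficient bound \eqref{22bddcoe}, and the super-exponential decay hypothesis \eqref{Sdec0}. Write $e^+(\theta,n)=\frac{\theta^n}{A_+(n)}\,g_n(\theta)$ with $g_n(\theta)=1+\sum_{j\geq 1}K_{+,j}(n)\theta^j$. Since $\theta^n/A_+(n)$ is a polynomial, it suffices to prove that $g_n$ is entire of exponential type zero and then transfer this to $e^+(\cdot,n)$ at the very end, the polynomial factor contributing only a harmless $O(\log r)$ to $\log|e^+(r\E^{\I\varphi},n)|$.

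First I would estimate the Taylor coefficients. By \eqref{22bddcoe} and the stated uniform boundedness of the set $\{D_{+,m}(n)\}$, there is a constant $M>0$ with $|K_{+,j}(n)|\leq M\,C_+(n+\floor{j/2}+1)$ for all $j\geq 1$. As $C_+$ is nonincreasing and $n+\floor{j/2}+1\geq \floor{j/2}$, hypothesis \eqref{Sdec0} yields, for all sufficiently large $j$, the bound $C_+(n+\floor{j/2}+1)\leq C_+(\floor{j/2})\leq C\,\floor{j/2}^{-(1+\delta)2\floor{j/2}}$; since $2\floor{j/2}\geq j-1$, this gives a bound of the form $|K_{+,j}(n)|\leq M'\,(j/2)^{-(1+\delta)j}$ for large $j$, after absorbing lower-order factors into the constant $M'$.

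From this, $|K_{+,j}(n)|^{1/j}\leq (M')^{1/j}(j/2)^{-(1+\delta)}\to 0$, so the power series $g_n$ has infinite radius of convergence and is therefore entire; hence $e^+(\cdot,n)$ is entire. Moreover $j\,|K_{+,j}(n)|^{1/j}\leq (M')^{1/j}\,2^{1+\delta}\,j^{-\delta}\to 0$, so $\limsup_{j\to\infty} j\,|K_{+,j}(n)|^{1/j}=0$, and Theorem~\ref{21cauchy} gives $\E\,\sigma_{g_n}=0$, i.e.\ $\sigma_{g_n}=0$. Unwinding the definition of the type, $\limsup_{r\to\infty} r^{-1}\log\max_{\varphi\in[0,2\pi]}|g_n(r\E^{\I\varphi})|=0$, so in particular $\limsup_{|\theta|\to\infty}|\theta|^{-1}\log|g_n(\theta)|\leq 0$. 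Finally, from $\log|e^+(\theta,n)|=n\log|\theta|-\log A_+(n)+\log|g_n(\theta)|$, dividing by $|\theta|$ and letting $|\theta|\to\infty$ kills the first two terms and leaves $\limsup_{|\theta|\to\infty}|\theta|^{-1}\log|e^+(\theta,n)|\leq 0$.

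There is no genuine obstacle here, only bookkeeping; the one point worth flagging is the floor function in \eqref{22bddcoe}, which makes the argument of $C_+$ grow only like $j/2$ rather than $j$. This is exactly why \eqref{Sdec0} is formulated with the exponent $(1+\delta)2N$ (note the factor $2$): it guarantees that $C_+(\floor{j/2})$ decays roughly like $(j/2)^{-(1+\delta)j}$, which is what is needed to force $j\,|K_{+,j}(n)|^{1/j}\to 0$ and hence minimal type. One must also keep the polynomial prefactor $\theta^n/A_+(n)$ separate from the entire series $g_n$ so that it does not affect the type.
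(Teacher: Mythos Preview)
Your proposal is correct and follows exactly the approach the paper intends: bound the Taylor coefficients $K_{+,j}(n)$ via \eqref{22bddcoe} and \eqref{Sdec0}, then invoke Theorem~\ref{21cauchy} to conclude that the series has type zero, with the polynomial prefactor $\theta^n/A_+(n)$ handled separately. The paper's own proof is the one-line sketch ``a simple application of Theorem~\ref{21cauchy} using \eqref{Sdec0} and \eqref{22bddcoe}'', so you have simply written out the details.

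One small bookkeeping point: the inequality $n+\floor{j/2}+1\geq \floor{j/2}$ you use to pass to $C_+(\floor{j/2})$ fails when $n\leq -2$, independently of $j$. The clean fix is to skip that intermediate step and apply \eqref{Sdec0} directly with $N=n+\floor{j/2}+1$ once $j$ is large enough that $N>0$; since $2N\geq j+2n+1$ and $N\sim j/2$, you still get a bound of the shape $|K_{+,j}(n)|\leq M'(j/2)^{-(1+\delta)j}$ for large $j$, and the rest of your argument goes through unchanged.
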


\begin{proof}
This is a simple application of Theorem~\ref{21cauchy} using \eqref{Sdec0} and \eqref{22bddcoe}.
\end{proof}

As a consequence we note that $\beta_+$ is analytic in the punctured unit disc $\mathbb{D}\setminus\{0\}$ and satisfies
\begin{equation}\label{3bmas}
\limsup_{|\theta|\to\infty} \frac{\log|\beta_+(\theta^{-1},n)|}{|\theta|}\leq 0.
\end{equation}

%%%%%%%%%%%%

\section{Schr\"odinger evolutions}
\label{secproof}

Consider
\begin{equation}
\mathcal{F}(f)(\theta) = \sum_{n\in\mathbb{Z}} f(n) \begin{pmatrix} e^+(\theta,n)\\ e^-(\theta,n)\end{pmatrix},
\end{equation}
then $\mathcal{F}: \ell^2(\mathbb{Z}) \to L^2(\mathbb{T}_+ \cup \{\theta_j\},d\rho)$ is unitary such that
\begin{equation}\label{sptrans}
\mathcal{F}(H f)(\theta) = \lambda(\theta) \mathcal{F}(f)(\theta),
\end{equation}
where
\begin{equation}
d\rho(\theta) = \begin{pmatrix} 1&0\\0&1\end{pmatrix} \frac{d\theta}{2\pi\I \theta |\alpha(\theta)|^2} +\sum_{j=1}^k \begin{pmatrix} \gamma_j&0\\0&0\end{pmatrix} d\Theta(\theta-\theta_j)
\end{equation}
is the associated spectral measure. Here $\theta_j$ are the eigenvalues of $H$, $\gamma_j^{-1} := \sum_{n\in\mathbb{Z}} |e^+(\theta_j,n)|^2$ are the corresponding norming constants,
and $d\Theta(\theta-\theta_j)$ is a Dirac measure centered at $\theta_j$.

In particular, if
\begin{equation*}
u(t)= \E^{-\I t H} u(0)
\end{equation*}
is the solution of \eqref{Ischeq0}, then
\begin{equation*}
\mathcal{F}(u(t))(\theta) = \E^{-\I\lambda(\theta)} \mathcal{F}(u(0))(\theta), \qquad \lambda(\theta)=\frac{1}{2}(\theta^{-1}+\theta).
\end{equation*}

\begin{proof}[Proof of Theorem~\ref{Ith1}.]
Consider the auxiliarly function $\Phi(t,\theta)$ defined as (using \eqref{scatrel})
\begin{align}\nonumber
\Phi(t,\theta) :&\!=\sum_{n\in\mathbb{Z}}u(t,n) e^-(\theta,n)\\ \label{3phiAB}
& \!= \sum_{n<0}u(t,n)e^-(\theta,n)+\beta_+(\theta)\sum_{n\geq 0}u(t,n)e^+(\theta,n)+\alpha(\theta) \sum_{n\geq 0}u(t,n)e^+(\theta^{-1},n)\\ \nonumber
&\! =: A_1(t,\theta)+\beta_+(\theta)A_2(t,\theta)+\alpha(\theta)B(t,\theta).
\end{align}
Due to our assumption \eqref{Jhyp2} and the estimate \eqref{22bddcoe} the two sums $A_k(t,\theta)$ converge compactly with respect to $\theta\in\mathbb{D}$ and hence represent
analytic functions on $\mathbb{D}$. Moreover, by Lemma~\ref{lembeta} $A(t,\theta):= A_1(t,\theta)+\beta_+(\theta)A_2(t,\theta)$ is analytic in $\mathbb{D}\setminus\{0\}$ and satisfies
\[
\limsup_{|\theta|\to\infty} \frac{\log|A(t,\theta^{-1})|}{|\theta|}\leq 0.
\]
By \eqref{3amas} it remains to study
\[
B(t,\theta):=B_1(t,\theta^{-1})+B_2(t,\theta^{-1}),
\]
where $B_1(t,\theta):=\sum_{n\geq 0}v(t,n) \theta^n$ and $B_2(t,\theta):=\sum_{n\geq 0}v(t,n)\sum_{j\geq 1}K_{+,j}(n)\theta^{j+n}$
and $v(t,n):=\frac{u(t,n)}{A_+(n)}$. Note that $v(t,.)\in\ell^2(\mathbb{Z})$ also satisfies \eqref{3uconv} (but of course with a different constant in general)
and hence $B_1(t,.)$ is entire with
\begin{equation}\label{limB1}
\limsup_{|\theta|\to\infty}\frac{\log|B_1(t,\theta)|}{|\theta|}\leq \frac{1}{4+\epsilon},\quad t\in\{0,1\}.
\end{equation}

Due to \eqref{22fourier} and \eqref{22bddcoe}, the series $B_2(t,\theta)$ is absolutely convergent for $t\in\{0,1\}$ and  we have
\begin{equation*}
B_2(t,\theta)=\sum_{n\geq 0} v(t,n) \sum_{j\geq 1}K_{+,j}(n)\theta^{j+n}=\sum_{j=1}^\infty b_j(t)\theta^j,\quad t\in\{0,1\},
\end{equation*}
where $b_j(t):=\sum_{k=0}^{j-1} v(t,k)K_{+,j-k}(k)$. Moreover, by \eqref{22bddcoe}, \eqref{Sdec0}
\begin{align*}
|K_{+,j}(n)| &\leq D_{+,j}(n)C_{+}(n+\floor{\frac{j}{2}}+1)\leq \frac{C}{(n+\floor{j/2}+1)^{(1+\delta)2(n+\floor{j/2}+1)}}\\
&\leq C \left(\frac{2}{j+n}\right)^{(1+\delta)(j+n)}.
\end{align*}
Whence using \eqref{3uconv}
\begin{equation*}
|b_j(t)| \leq \sum_{k=0}^{j-1} |v(t,k)| |K_{+,j-k}(k)| \leq
C \left(\frac{2}{j}\right)^{(1+\delta)j} \sum_{k=0}^{j-1} \Big(\frac{e}{(4+\epsilon)k}\Big)^k
\end{equation*}
here $C>0$ is a constant. Thus $B_2(t,.)$ is entire with
\begin{equation*}
\limsup_{|\theta|\to\infty}\frac{\log|B_2(t,\theta)|}{|\theta|}\leq0,\quad t\in\{0,1\}.
\end{equation*}
In summary we have
\begin{equation*}
\limsup_{|\theta|\to\infty}\frac{\log|B(t,\theta^{-1})|}{|\theta|}\leq \limsup_{|\theta|\to\infty}\frac{\log|B_1(t,\theta)|}{|\theta|}\leq \frac{1}{4+\epsilon},\quad t\in\{0,1\}
\end{equation*}
and therefore
\begin{equation*}
\limsup_{|\theta|\to\infty}\frac{\log|\Phi(t,\theta^{-1})|}{|\theta|}\leq\frac{1}{4+\epsilon},\quad t\in\{0,1\}.
\end{equation*}
Using inequality \eqref{Gindic}, this implies
\begin{equation*}
\begin{split}
0&\leq \limsup_{r\to\infty}\frac{\log|\Phi(t,r^{-1} \E^{\I\pi/2})|}{r}+ \limsup_{r\to\infty}\frac{\log|\Phi(t,r^{-1} \E^{-\I\pi/2})|}{r}\\
&\leq \frac{1}{4+\epsilon}+\limsup_{r\to\infty}\frac{\log|\Phi(t,r^{-1} \E^{\pm \I\pi/2})|}{r},\quad t\in\{0,1\},
\end{split}
\end{equation*}
that is,
\begin{equation*}
\begin{split}
\limsup_{r\to\infty}\frac{\log|\Phi(t,r^{-1} \E^{\pm \I\pi/2})|}{r}\geq-\frac{1}{4+\epsilon},\quad t\in\{0,1\}.
\end{split}
\end{equation*}
On the other hand, by \eqref{sptrans} we have
\begin{equation*}
\Phi(t,\theta)=\E^{-\I t\lambda(\theta)}\Phi(0,\theta)
\end{equation*}
for $|\theta|=1$ in the sense of $L^2$. Since we have seen that $\Phi(1,\theta)$ is analytic for $\theta\in\mathbb{D}\setminus\{0\}$ and continuous
up to $\mathbb{T}$ we conclude that
\begin{equation}\label{Phi}
\Phi(1,\theta)= \E^{-\I \lambda(\theta)}\Phi(0,\theta), \qquad 0<|\theta|\le 1.
\end{equation}
But this is not possible unless $\Phi\equiv 0$, since by \eqref{Phi} we have
\begin{equation*}
\limsup_{y\to\infty}\frac{\log|\Phi(1,-\I y^{-1})|}{y}= \frac{1}{2} +\limsup_{y\to\infty}\frac{\log|\Phi(0,-\I y^{-1})|}{y}\geq \frac{1}{2}-\frac{1}{4+\epsilon}
>\frac{1}{4+\epsilon}
\end{equation*}
In addition, we know 
\[
\limsup_{y\to\infty}\frac{\log|B_1(1,\I y)|}{y}\geq\limsup_{y\to\infty}\frac{\log|B(1,-\I y^{-1})|}{y}\geq\limsup_{y\to\infty}\frac{\log|\Phi(1,-\I y^{-1})|}{y},
\]
contradicting \eqref{limB1} unless $\Phi(1,\theta)\equiv0$.
But this implies $\mathcal{F}_1(u(1))(\theta)=0$ for $\theta\in\mathbb{T}$ and $\theta=\theta_j$.
Using \eqref{scatrel} we also get $\mathcal{F}_2(u(1))(\theta)=0$ for $\theta\in\mathbb{T}$ and hence
$\mathcal{F}(u(1))\equiv 0$, that is $u(t)\equiv0$. 
\end{proof}

%%%%%%%
\medskip

%%%%%%%%
\noindent 
{\bf Acknowledgments.}
We are indebted to Yura Lyubarskii for discussions on this topic and to the anonymous referee for valuable remarks leading to an improved presentation.
I.\ A-R.\ gratefully acknowledges the hospitality of the Faculty of Mathematics, University of Vienna, Austria, during May, June 2016 where this research was performed.
%%%%%%%%

\end{document}